\documentclass[10pt,aps,showpacs,pra,twocolumn,superscriptaddress]{revtex4-2}

\usepackage[utf8]{inputenc}
\usepackage[T1]{fontenc}
\usepackage[polish,english]{babel}
\usepackage{times}
\usepackage{amssymb}
\usepackage{mathtools}
\usepackage{amsthm}
\usepackage{MnSymbol}
\usepackage{dsfont}
\usepackage{bm}
\usepackage{textcomp}
\usepackage{xcolor}
\usepackage{physics}
\usepackage[shortlabels]{enumitem}
\usepackage{dcolumn}
\usepackage{graphicx}
\usepackage[colorlinks=true,citecolor=blue,urlcolor=blue]{hyperref}

\theoremstyle{definition}
\newtheorem{theorem}{Theorem}
\newtheorem{observation}{Observation}
\newtheorem{definition}{Definition}
\newtheorem{proposition}{Proposition}
\newtheorem{conjecture}{Conjecture}

\begin{document}

\title{Equi-Entropic Maps for Four-Partite Quantum States}

\author{Wojciech Bruzda}
\email{w.bruzda@cft.edu.pl}
\affiliation{Center for Theoretical Physics, Polish Academy of Sciences, Aleja Lotnik\'{o}w 32/46, 02-668 Warsaw, Poland}
\affiliation{Center for Quantum-Enabled Computing, Center for Theoretical Physics, Polish Academy of Sciences, Aleja Lotnik\'{o}w 32/46, 02-668 Warsaw, Poland}
\author{Zahra Raissi}
\email{zahra.raissi@uni-muenster.de }
 \affiliation{Department for Quantum Technology, University of M\"unster, 48149 M\"unster, Germany}

\date{\today}

\begin{abstract}

{Absolutely maximally entangled states represent a highly constrained form of multipartite entanglement and play an important role in quantum information theory. We investigate a weaker form of uniformity of entanglement for four-party systems of local dimension $d>2$ that requires the three balanced bipartitions to have equal but not necessarily maximal linear entropy. We introduce a linear map $\Xi$ that enforces exact equality of entropies under reshuffling and partial transposition. The transformation arises as the asymptotic limit of an iterative averaging procedure and admits a group-theoretic description in terms of permutations of tensor indices. For Haar-random unitary inputs, a leading-moment analysis supported by numerical simulations predicts highly entangled outputs whose common entropy approaches the maximal value as the local dimension grows. We characterize the algebraic structure, fixed points, and asymptotic behavior of this map and its relation to two-unitary matrices and orthogonal Latin squares.}
\end{abstract}

\maketitle

\section{Introduction}

Random pure states in high-dimensional multipartite systems are typically strongly entangled~\cite{Hayden2006}. More generally, the distribution of entanglement across different bipartitions provides an important characterization of multipartite quantum states~\cite{Sc04,Facchi_2010}. Absolutely maximally entangled (AME) states constitute an extremal case, as all their reductions to half of the parties are maximally mixed~\cite{HCLRL12,GALRZ15}. They are closely connected to quantum error-correcting codes, orthogonal arrays, combinatorial designs and multiunitary matrices ~\cite{Helwig_2013,GZ2014,GALRZ15,Goyeneche_2018,Raissi_2018,Raissi_2017,Burchardt_2020}. For a recent comprehensive review of AME states, including their constructions, applications and relation to multiunitary operators, see Ref.~\cite{RMBRLZ26}.

In this work, we focus on four-party systems with equal local dimension $d$. Their three balanced bipartitions admit a common $d^2\times d^2$ matrix representation and are related by reshuffling and partial transposition~\cite{GALRZ15}. This structure gives rise to two-unitary matrices, whose existence is equivalent to the existence of AME$(4,d)$ states. It also suggests a weaker question: can one impose the same entanglement entropy across all three balanced bipartitions without requiring that entropy to be maximal?

Motivated by this question and by earlier work on distributing entanglement as uniformly as possible over inequivalent bipartitions~\cite{Facchi_2010,FFPP08}, we introduce a linear map $\Xi$, constructed from reshuffling and partial transposition, whose nonzero outputs have identical normalized
linear entropies across the three balanced bipartitions.

The map separates entropic uniformity from maximality, i.e., exact AME states satisfy both conditions, while nonzero elements in the image of $\Xi$ are guaranteed to satisfy only the former but not necessarily the latter. To the best of our knowledge, this iterative symmetrization and its parity-dependent equi-entropic limits have not previously been studied in this setting.

We characterize the algebraic structure and fixed points of $\Xi$ using the group action generated by reshuffling and partial transposition. Although generic outputs are not two-unitary, special fixed points correspond to exact AME$(4,d)$ states. For permutation matrices, the fixed-point condition imposes an additional cyclic symmetry on the associated orthogonal Latin squares. Numerical results reveal such symmetric two-unitary fixed points in several dimensions and motivate the conjecture formulated below.

For Haar-random unitary inputs, analytical estimates supported by numerical simulations indicate that the common output entropy approaches its maximal value as the local dimension grows, with an average deficit of order $d^{-2}$. Thus, the symmetry imposed by $\Xi$ enforces exact equality of the three entropies, while high-dimensional random-matrix behavior makes their common value close to maximal. We also relate the entropy deficit to the deviation of the corresponding reduced states from maximal mixedness, however, this relation does not imply geometric closeness to an exact two-unitary matrix.

The remainder of the paper is organized as follows. In Sec.~\ref{sec:AME4d}, we recall four-party AME states, two-unitary matrices, and the normalized linear entropy used throughout the work. In Sec.~\ref{sec:equi-entropic-map}, we introduce and analyze the map $\Xi$, its fixed points, its relation to orthogonal Latin squares, and its large-dimensional behavior. We conclude in Sec.~\ref{sec:summary}.

\section{Absolutely Maximally Entangled States of Four Parties}\label{sec:AME4d}

Absolutely maximally entangled states are highly entangled quantum states which generalize maximally entangled Bell states to larger number of parties and higher local dimension. Such states can be defined in a general scenario of $n$ parties each of any local dimension~\cite{GALRZ15, BZ26}. In this work we restrict attention to the case where the number of parties is limited~to~$4$.

Let $\mathcal{H}=\mathcal{H}_A\otimes\mathcal{H}_B\otimes\mathcal{H}_C\otimes\mathcal{H}_D \cong(\mathbb{C}^d)^{\otimes 4}$ be a composite Hilbert space, where each individual subspace has local dimension $d>2$.

\begin{definition}A pure quantum state $|\psi\rangle\in\mathcal{H}$ is called absolutely maximally entangled, denoted AME$(4, d)$, if every two-party reduced density matrix is maximally mixed, i.e.,
$\rho_Q = \mathrm{Tr}_{\overline Q}|\psi\rangle\langle\psi|=\rho_*=\mathbb{I}_{d^2}/d^2$, where $Q\in\{AB, AC, AD\}$ represents any pair of subsystems and $\overline{Q}$ is its complement.
\end{definition}
This definition generalizes straightforwardly to any (not necessarily even) number of parties, $n\geqslant 2$.

Consider an arbitrary complex-valued matrix $X\in\mathbb{C}^{d^2\times d^2}$ for $d>2$, addressed by a two-index $X_{ij,kl}$,  which describes a bipartite quantum system.

\begin{definition}[Cf. Ref.\cite{GALRZ15}]
    A unitary matrix $U\in\mathbb{U}(d^2)$, with matrix elements $U_{ij,kl}$, is called two-unitary if its reshuffled $U_{ij,kl}^{\mathrm{R}} = U_{ik,jl}$ and partially transposed forms $U_{ij,kl}^{\Gamma} = U_{il,kj}$ are also unitary matrices.
\end{definition}
Note that both operations can be applied to any matrix $X$, provided that the dimensionality allows for these particular rearrangements.

The limitation to four parties allows us to write a compact matrix representation of an AME$(4, d)$ state. Namely, the following observation is true for any $d>2$.
\begin{observation}
$\exists_{\,|\psi\rangle\,\in\,\mathrm{AME}(4,d)} \Longleftrightarrow$ $\exists_{\,\text{two-unitary matrix }\,U\,\in\,\mathbb{U}(d^2)}$.
\end{observation}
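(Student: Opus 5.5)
The plan is to use the standard correspondence between a pure four-party state and a $d^2\times d^2$ matrix. Write $|\psi\rangle=\frac{1}{d}\sum_{i,j,k,l}U_{ij,kl}\,|i\rangle_A|j\rangle_B|k\rangle_C|l\rangle_D$, so that $U$ is the coefficient tensor viewed as a map from $\mathcal{H}_C\otimes\mathcal{H}_D$ to $\mathcal{H}_A\otimes\mathcal{H}_B$. Normalization of $|\psi\rangle$ is $\mathrm{Tr}\,U^\dagger U=d^2$. The only general fact needed is this: if $M$ is the coefficient matrix of a pure state across a bipartition $Q|\overline Q$, then $\rho_Q=MM^\dagger$ and $\rho_{\overline Q}=(M^\dagger M)^{T}$. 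Hence, for $d^2\times d^2$ blocks with the $1/d$ normalization, $\rho_Q=\mathbb{I}_{d^2}/d^2$ holds if and only if $UU^\dagger=\mathbb{I}$, i.e. if and only if $U$ is unitary.

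The key step is to identify, for each of the three balanced bipartitions $AB|CD$, $AC|BD$, $AD|BC$, the matrix of coefficients with the appropriate rearrangement of $U$.
\begin{itemize}[leftmargin=*]
\item $AB|CD$: the matrix with rows $(ij)$ and columns $(kl)$ is $U$ itself.
\item $AC|BD$: rows $(ik)$ and columns $(jl)$ give $U^{\mathrm R}$, since $U^{\mathrm R}_{ij,kl}=U_{ik,jl}$ after relabeling dummy indices.
\item $AD|BC$: rows $(il)$ and columns $(kj)$ give $U^{\Gamma}$. Here $U^\Gamma_{ij,kl}=U_{il,kj}$, so the roles of $j$ and $l$ are swapped; this is a matrix with rows labelled by $A,D$ and columns by $C,B$. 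The ordering of the column factors is only a fixed permutation of columns, i.e. right multiplication by the swap unitary, which does not affect unitarity.
\end{itemize}
Since reduced states on complementary pairs are determined by each other (and the definition only requires $AB$, $AC$, $AD$), every two-party marginal is maximally mixed if and only if $U$, $U^{\mathrm R}$ and $U^\Gamma$ are all unitary.

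Both directions of the Observation follow. Given $|\psi\rangle\in\mathrm{AME}(4,d)$, define $U$ by $U_{ij,kl}=d\,\langle ijkl|\psi\rangle$; by the above, $U$ is two-unitary. Conversely, given a two-unitary $U$, the state defined by the display is normalized, since $\mathrm{Tr}\,U^\dagger U=d^2$, and is AME.

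The main obstacle is purely bookkeeping: I must check that the index conventions for R and $\Gamma$ given in the definition of two-unitary really match the $AC|BD$ and $AD|BC$ groupings, including the order within each index pair. I would handle this by writing each marginal explicitly, e.g. $(\rho_{AC})_{ik,i'k'}=\frac1{d^2}\sum_{j,l}U_{ij,kl}\overline{U_{i'j,k'l}}$, and matching it to $(U^{\mathrm R}U^{\mathrm R\dagger})_{ik,i'k'}/d^2$, then doing the same for $AD$ with $U^\Gamma$. For $AD$, any remaining discrepancy in index order is a permutation of rows or columns, and such permutations preserve unitarity.
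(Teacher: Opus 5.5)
Your proof is correct and follows the paper's route. The paper only points to the matrix--state correspondence \eqref{ms-correspondence}; you supply the bookkeeping it leaves implicit, identifying the coefficient matrices of the cuts $AB|CD$, $AC|BD$, $AD|BC$ with $U$, $U^{\mathrm{R}}$ and $U^{\Gamma}$ (the last up to a column swap), so that each marginal is maximally mixed exactly when the corresponding rearrangement is unitary. The only difference is a convention: in \eqref{ms-correspondence} the amplitude of $|i\rangle_A|j\rangle_B|k\rangle_C|l\rangle_D$ is $U_{kl,ij}/d$ rather than your $U_{ij,kl}/d$, but $(U^{T})^{\mathrm{R}}=S\,U^{\mathrm{R}}S$ and $(U^{T})^{\Gamma}=(U^{\Gamma})^{T}$ with $S$ the swap, so transposition preserves two-unitarity and nothing changes.
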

\noindent It can be inferred from the formula
\begin{equation}
    |\psi\rangle = \frac{1}{d}\sum_{j=0}^{d-1}\sum_{k=0}^{d-1}|j\rangle_A\otimes|k\rangle_B\otimes U\big(|j\rangle_C\otimes |k\rangle_D\big)\in\mathbb{C}^{d^4}.\label{ms-correspondence}
\end{equation}
This matrix-state correspondence simplifies the construction of AME states.

To describe any non-zero square matrix $X$ of size $d^2$ quantitatively, we consider a normalized linear entropy
\begin{equation}
S(X)=\frac{d^2}{d^2-1}\left(1-\frac{{\rm Tr}(XX^{\dagger}XX^{\dagger})}{{\rm Tr}^2(XX^{\dagger})}\right)\in[0,1].\label{SL_entropy}
\end{equation}
Note that for $\rho_X = XX^{\dagger}/\mathrm{Tr}(XX^{\dagger})$, which is a well-defined density matrix for any $X\neq 0$, Eq.~\eqref{SL_entropy} simplifies to
\begin{equation}
S(X) = \frac{d^2}{d^2-1}\left(1-\mathrm{Tr}\rho_X^2\right),
\end{equation}
which can be understood as the normalized linear entropy of the state $\rho_X$~\cite{MW2002,Sc04, PhysRevResearch.2.043126}.

Extreme values of $S$ are achieved by special matrices. Obviously, the entropy vanishes for any rank-one matrix and it saturates at $1$, if $XX^{\dagger} = \mathrm{Tr}(XX^{\dagger})\mathbb{I}/d^2$, which means that $X$ is proportional to a unitary matrix. This implies the scaling property of the entropy, $S(kX) = S(X)$ for any non-zero scalar $k\in\mathbb{C}$. Hence, $S$ measures the flatness of the singular values of $X$, i.e., the more uniform the singular values are, the closer $X$ is to a unitary matrix (after a possible normalization).

To examine the three bipartitions that characterize AME$(4, d)$ states, we define the vector of entropies $\big(S(X),S(X^{\mathrm{R}}),S(X^\Gamma)\big)$. If all three components are equal, we denote their common value by $s\equiv s(X) = S(X)=S(X^{\mathrm{R}})=S(X^\Gamma)$. In particular, the condition $s=1$ means that $X$, $X^{\mathrm{R}}$ and $X^\Gamma$ are proportional to unitary matrices, so after adjusting the normalization,
$\mathrm{Tr}(XX^\dagger)=d^2$, this is equivalent to the fact that $X$ is a two-unitary matrix. Hence
\begin{equation}
|\psi\rangle= \frac{|X\rangle\!\rangle}{\sqrt{\mathrm{Tr}(XX^\dagger)}}\in\mathrm{AME}(4, d),
\end{equation}
where $|X\rangle\!\rangle=\mathrm{vec}(X)$ is matrix vectorization.

\section{Construction of Equi-Entropic Map}\label{sec:equi-entropic-map}

\subsection{\texorpdfstring{Algebraic Construction}{}}

Let $0\neq X_0\in\mathbb{C}^{d^2\times d^2}$ for $d\geqslant 2$ be any matrix.
Define a recursive sequence
\begin{equation}
    X_{k+1}\equiv\frac{1}{2}X_k^{\mathrm{R}}+\frac{1}{2}X_k^{\rm\Gamma} \ : \ k\in\mathbb{N}\cup\{0\}.\label{Xk-definition}
\end{equation}
Examination of the few initial elements of the sequence $(X_k)$ provides the closed form for the iterative formula~\eqref{Xk-definition}. One has
\begin{align}
    X_1&=\frac{1}{2}\left(X_0^{\mathrm{R}}+X_0^{\rm\Gamma}\right)\\
    X_2&=\frac{1}{4}\left(X_0^{\mathrm{RR}}+X_0^{\mathrm{R}\Gamma}+X_0^{\Gamma\mathrm{R}}+X_0^{\Gamma\Gamma}\right)\\
    X_3&=\frac{1}{8}\left(X_0^{\mathrm{RRR}}+X_0^{\mathrm{R}\Gamma \mathrm{R}}+X_0^{\Gamma\mathrm{RR}}+...+X_0^{\Gamma\Gamma\Gamma}\right)\\
    &\vdots\nonumber
\end{align}
In general, the $k^{\rm th}$ element takes the form
\begin{equation}
    X_k = \frac{1}{2^k}\sum_{g_j\in\{{\mathrm{R}, \Gamma}\}}X_0^{g_1g_2\cdot\cdot\cdot g_k},\label{Xk}
\end{equation}
where the summation is performed over all $2^k$ strings
$(g_1,\ldots,g_k)$ with elements in $\{{\mathrm{R}},\Gamma\}$. It is understood that operations are performed in the following order $X^{\mathrm{R}\Gamma}=(X^{\mathrm{R}})^\Gamma$.

To simplify (and slightly abusing) notation, we rewrite reshuffling and partial transpose as operators acting on a matrix, ${\mathrm{R}}(X) \equiv X^{\mathrm{R}}$ and ${\rm\Gamma}(X) \equiv X^{\rm\Gamma}$. Hence $X^{g_1\cdot\cdot\cdot g_k}=g_k\circ\cdot\cdot\cdot\circ g_1(X)=\prod_{j}g_j(X)$ denotes standard composition of operators. Now Eq.~\eqref{Xk} can be expressed as
\begin{equation}
    X_k=\frac{1}{2^k}\left(\sum_{g_j\in\{\mathrm{R},\Gamma\}} \prod_{j=k}^1 g_j\right)(X_0).
\end{equation}

Both maps ${\mathrm{R}}$ and $\Gamma$ are involutions, i.e., ${\mathrm{R}}^2=\Gamma^2=\mathbb{I}$. They do not commute, however when applied sequentially to a given matrix they always reduce to one form from the set
\begin{equation}
    \mathbb{G} = \Big\{\mathbb{I},{\mathrm{R}},\Gamma, {\mathrm{R}}\Gamma, \Gamma{\mathrm{R}}, {\mathrm{R}}\Gamma{\mathrm{R}}=\Gamma{\mathrm{R}}\Gamma\Big\},
\end{equation}
where $\mathbb{I}$ is the identity operator. This fact is a consequence of the definition of ${\mathrm{R}}$ and ${\rm\Gamma}$ and one can infer the structure of $\mathbb{G}$ by examining the cyclic rearrangements of multi-indices. In other words, any sequence of operators ${\mathrm{R}}$ and ${\rm\Gamma}$ reproduces one particular element from $\mathbb{G}$, so they generate a group with the binary operation, $\circ$, being a composition of two operations. The group table reads:
\begin{equation}
\begin{array}{c!{\vrule width 1.5pt}c|c|c|c|c|c}
\circ & \mathbb{I} & {\mathrm{R}} & \Gamma & {\mathrm{R}}\Gamma & \Gamma{\mathrm{R}} & {\mathrm{R}}\Gamma{\mathrm{R}}\\
\noalign{\hrule height 1.5pt}
\mathbb{I} & \mathbb{I} & {\mathrm{R}} & \Gamma & {\mathrm{R}}\Gamma & \Gamma{\mathrm{R}} & {\mathrm{R}}\Gamma{\mathrm{R}}\\
\hline
{\mathrm{R}} & {\mathrm{R}} & \mathbb{I} & {\mathrm{R}}\Gamma & \Gamma & {\mathrm{R}}\Gamma{\mathrm{R}} & \Gamma {\mathrm{R}}\\
\hline
\Gamma & \Gamma & \Gamma{\mathrm{R}} & \mathbb{I} & \Gamma{\mathrm{R}}\Gamma & {\mathrm{R}} & {\mathrm{R}}\Gamma\\
\hline
{\mathrm{R}}\Gamma & {\mathrm{R}}\Gamma & {\mathrm{R}}\Gamma{\mathrm{R}} & {\mathrm{R}} & \Gamma{\mathrm{R}} & \mathbb{I} & \Gamma\\
\hline
\Gamma{\mathrm{R}} & \Gamma{\mathrm{R}} & \Gamma & \Gamma{\mathrm{R}}\Gamma & \mathbb{I} & {\mathrm{R}}\Gamma & {\mathrm{R}}\\
\hline
{\mathrm{R}}\Gamma{\mathrm{R}} & {\mathrm{R}}\Gamma{\mathrm{R}} & {\mathrm{R}}\Gamma & \Gamma{\mathrm{R}} & {\mathrm{R}} & \Gamma & \mathbb{I}
\end{array}.\label{group-table}
\end{equation}

\begin{proposition}\label{Xk-convergence}
    For every $d\geqslant 2$ and every matrix $X_0\in\mathbb C^{d^2\times d^2}$, the sequence defined in~\eqref{Xk-definition} contains two subsequences for even and odd indices. Both of them converge.
\end{proposition}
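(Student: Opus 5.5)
The proof will go through the group $\mathbb{G}$ of Eq.~\eqref{group-table}. Since $\mathrm{R}$ and $\Gamma$ are involutions and $\mathrm{R}\Gamma\mathrm{R}=\Gamma\mathrm{R}\Gamma$, the group $\mathbb{G}$ is isomorphic to $S_3$. Under this isomorphism $\mathrm{R}$ and $\Gamma$ are transpositions; in fact they permute the three tensor indices $j,k,l$ while keeping $i$ fixed. The iteration is $X_{k+1}=T(X_k)$ with the linear operator $T=\tfrac12(\mathrm{R}+\Gamma)$. By Eq.~\eqref{Xk}, $X_k=\sum_{g\in\mathbb{G}}p_k(g)\,g(X_0)$, where $p_k$ is the law after $k$ steps of a random walk on $S_3$. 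At each step the walk multiplies by $\mathrm{R}$ or $\Gamma$ with probability $1/2$ each. This multiplication order is the paper's composition convention; the analysis below does not depend on whether the convention is left or right multiplication.

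The problem therefore reduces to convergence of the distributions $p_k$ on the finite group. Both generators are odd permutations, so after an even number of steps the walk lies in the alternating subgroup $A_3=\{\mathbb{I},\mathrm{R}\Gamma,\Gamma\mathrm{R}\}$, and after an odd number of steps it lies in the coset $\{\mathrm{R},\Gamma,\mathrm{R}\Gamma\mathrm{R}\}$. This parity obstruction is the reason the full sequence need not converge, and it is why the statement separates even and odd indices.

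Next consider the two-step walk, whose step distribution is $\mu=\tfrac14(\mathbb{I}+\mathrm{R}\Gamma+\Gamma\mathrm{R}+\mathbb{I})$ on $A_3\cong\mathbb{Z}_3$. This distribution gives mass $1/2$ to the identity and $1/4$ to each of the two generators of $\mathbb{Z}_3$. Its Fourier coefficients are $\hat\mu(\chi)=\tfrac12+\tfrac14(\omega+\bar\omega)=\tfrac14$ for each nontrivial character ($\omega=e^{2\pi i/3}$), so $p_{2m}(g)\to\tfrac13$ on $A_3$ geometrically fast, with rate $4^{-m}$. The same argument applied to $p_{2m+1}=p_{2m}*\delta$, where $\delta$ is the first step, gives the uniform limit on the odd coset.

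Finally, linearity gives the limits directly:
\[
X_{2m}\to \tfrac13\big(X_0+X_0^{\mathrm{R}\Gamma}+X_0^{\Gamma\mathrm{R}}\big),\qquad
X_{2m+1}\to \tfrac13\big(X_0^{\mathrm{R}}+X_0^{\Gamma}+X_0^{\mathrm{R}\Gamma\mathrm{R}}\big).
\]
The error is bounded by $\sum_g|p_k(g)-\tfrac13|\,\|g(X_0)\|$. Each $g$ only permutes matrix entries, so $\|g(X_0)\|_F=\|X_0\|_F$, and the bound is $O(4^{-k/2})\|X_0\|_F$.

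An equivalent check is possible on the operator side. $T^2=\tfrac12\mathbb{I}+\tfrac14(\mathrm{R}\Gamma+\Gamma\mathrm{R})$, and $C=\mathrm{R}\Gamma$ satisfies $C^3=\mathbb{I}$, so $C$ is diagonalizable with eigenvalues $1,\omega,\bar\omega$. $T^2$ then has eigenvalues $1$ on the fixed space of $C$ and $1/4$ on the other eigenspaces, so $T^{2m}$ converges to the projector $\tfrac13(\mathbb{I}+C+C^2)$.

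The main obstacle is to justify the group structure rigorously, namely $(\mathrm{R}\Gamma)^3=\mathbb{I}$, i.e.\ the braid relation $\mathrm{R}\Gamma\mathrm{R}=\Gamma\mathrm{R}\Gamma$. I would establish it from the index permutations: $\mathrm{R}$ swaps the positions of $j$ and $k$ in $X_{ij,kl}$, and $\Gamma$ swaps $j$ and $l$. These two transpositions generate $S_3$ on the positions $\{j,k,l\}$, and the braid relation follows immediately. Everything else is the finite Fourier computation above.
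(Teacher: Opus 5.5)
Your proof is correct and is essentially the paper's argument. The paper writes $\xi^2=\Pi+(\mathbb{I}-\Pi)/4$ with $\Pi=\frac13(\mathbb{I}+\chi+\chi^2)$ and $\chi=\mathrm{R}\Gamma$, which is your operator-side check (your Fourier coefficient $1/4$ on $\mathbb{Z}_3$ is the same eigenvalue computed in the group algebra), and it then obtains the odd limit as $\xi\Pi=\frac13(\mathrm{R}+\Gamma+\mathrm{R}\Gamma\mathrm{R})$. Your derivation of $(\mathrm{R}\Gamma)^3=\mathbb{I}$ from the index transpositions $j\leftrightarrow k$ and $j\leftrightarrow l$ fills in a step the paper only asserts via the group table, and your Frobenius-norm bound makes the exponential rate explicit.
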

\begin{proof}
Define $\xi\equiv\frac12({\mathrm{R}}+\Gamma)$.
Then the sequence~\eqref{Xk-definition} can be written as $X_k=\xi^k(X_0)$.  Obviously
\begin{align}
\xi^2 &= \frac14 \left( {\mathrm{R}}^2 + {\mathrm{R}}\Gamma + \Gamma{\mathrm{R}} + \Gamma^2 \right)
= \frac12\mathbb{I} + \frac14{\mathrm{R}}\Gamma + \frac14\Gamma{\mathrm{R}}\\
&=\frac12\mathbb{I}+\frac14 \chi+\frac14 \chi^2,
\end{align}
where $\chi=\mathrm{R}\Gamma$. Because $\chi^3=\mathbb{I}$, then $\Pi\equiv(\mathbb{I}+\chi+\chi^2)/3 = \Pi^2$ is a projection and $\xi^2=\Pi+(\mathbb{I}-\Pi)/4$. This allows us to write the even powers of $\xi$ as
\begin{equation}
\xi^{2k} = (\xi^2)^k = \Pi + (\mathbb{I}-\Pi)/4^k\longrightarrow\Pi\,(k\to\infty).
\end{equation}
Hence $X_{2k}=\xi^{2k}(X_0)$ and
\begin{equation}
    \exists\,\Xi_{\texttt{e}}(X_0)\equiv\lim_{k\to\infty} X_{2k} = \frac{1}{3}\left(X_0+X_0^{{\mathrm{R}}\Gamma}+X_0^{\Gamma{\mathrm{R}}}\right).
\end{equation}

As for the odd powers, one has the following decomposition
$X_{2k+1}=\xi^{2k+1}(X_0)=\xi(\xi^{2k}(X_0))$. Then 
\begin{align}
\xi^{2k+1} \stackrel{k\to\infty}{\longrightarrow} \xi \Pi&=\frac{1}{2}(\mathrm{R}+\Gamma)\frac{1}{3}(\mathbb{I}+\mathrm{R}\Gamma+\Gamma\mathrm{R})\\
&=\frac{1}{3}(\mathrm{R}+\Gamma+\mathrm{R}\Gamma\mathrm{R}).
\end{align}
Hence $X_{2k+1}=\xi^{2k+1}(X_0)$ and
\begin{equation}
\exists\,\Xi_{\texttt{o}}(X_0) \equiv \lim_{k\to\infty}X_{2k+1} = \frac13\left(X_0^{\mathrm{R}}+X_0^\Gamma+X_0^{{\mathrm{R}}\Gamma{\mathrm{R}}}\right).
\end{equation}
In conclusion, both subsequences converge.
\end{proof}

The proof of Proposition~\ref{Xk-convergence} provides two compact formulas:
\begin{align}
\Xi_{\texttt{e}}(X) &= \frac{1}{3}\left(X + X^{\mathrm{R}\Gamma}+X^{\rm\Gamma R}\right),\label{eq:Xi1}\\
\Xi_{\texttt{o}}(X) &= \frac{1}{3}\left(X^{\mathrm{R}} + X^{\Gamma}+X^{\mathrm{R}\Gamma\mathrm{R}}\right).\label{eq:Xi2}
\end{align}
The most important feature of both maps $\Xi_{\texttt{e}}$ and $\Xi_{\texttt{o}}$ is that their images are equi-entropic with respect to the three rearrangements $Y$, $Y^{\mathrm{R}}$ and $Y^\Gamma$. This property is formally expressed in the following theorem.

\begin{theorem}
Let $X\in\mathbb{C}^{d^2\times d^2}$ for $d\geqslant 2$, and let $Y=\Xi_{\texttt{e}}(X)$ or $Y=\Xi_{\texttt{o}}(X)$. If $Y\neq 0$ then
$S(Y)=S(Y^{\mathrm{R}})=S(Y^\Gamma)$.
\end{theorem}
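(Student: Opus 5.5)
We plan to reduce the statement to a symmetry property of $Y$ under the group $\mathbb{G}$, together with the invariance of the entropy $S$ under a few elementary operations. The key point is that the index rearrangements $\mathrm{R}$, $\Gamma$ and $\mathrm{R}\Gamma\mathrm{R}$ permute the four indices $(i,j,k,l)$ of $X_{ij,kl}$. The composition $\chi=\mathrm{R}\Gamma$ has order three, so together with its inverse it generates a cyclic subgroup $C_3=\{\mathbb{I},\chi,\chi^2\}\subset\mathbb{G}$. Since $\mathbb{G}$ is nonabelian of order six, it is isomorphic to $S_3$ and $C_3$ is normal in it. The cosets are $C_3$ itself and $\mathrm{R}C_3=\{\mathrm{R},\Gamma,\mathrm{R}\Gamma\mathrm{R}\}$.

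\textbf{Step 1: the images of $\Xi_{\texttt{e}}$ and $\Xi_{\texttt{o}}$ are $\chi$-invariant.} Write $\Xi_{\texttt{e}}=\Pi$ and $\Xi_{\texttt{o}}=\mathrm{R}\Pi$, using the coset description $\{\mathrm{R},\Gamma,\mathrm{R}\Gamma\mathrm{R}\}=\mathrm{R}\{\mathbb{I},\chi,\chi^2\}$ checked against the group table \eqref{group-table}.

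\begin{itemize}
\item For $Y=\Pi(X)$ we get $\chi(Y)=Y$, since $\chi\Pi=\Pi$.
\item For $Y=\mathrm{R}\Pi(X)$, normality gives $\chi\mathrm{R}=\mathrm{R}\chi^{2}$, hence $\chi(Y)=\mathrm{R}\chi^2\Pi(X)=Y$.
\end{itemize}

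In both cases $Y^{\mathrm{R}\Gamma}=Y$. Because $\Gamma$ is an involution, this is equivalent to $Y^{\mathrm{R}}=Y^{\Gamma}$, after applying $\Gamma$ to both sides in the composition order used in the paper. This step only requires careful bookkeeping of composition order; I would verify it directly from the table.

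\textbf{Step 2: compare $S(Y)$ with $S(Y^{\mathrm{R}})$.} By Step 1 we already have $S(Y^{\mathrm{R}})=S(Y^\Gamma)$ trivially, so only this comparison remains. Here we use the other $\chi$-fixed consequence, namely $Y=\chi^2(Y)=Y^{\Gamma\mathrm{R}}$. This means $Y^{\mathrm{R}}$ is obtained from $Y$ by an index permutation, so we must show that this particular permutation preserves $S$.

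I expect this to be the main obstacle: a reshuffle changes the bipartition and does not in general preserve singular values. The way around it is that $S$ depends only on the bipartition $\{AB|CD\}$, $\{AC|BD\}$ or $\{AD|BC\}$ of the associated four-party vector $|Y\rangle\!\rangle$. Concretely, $S(X)$ equals the normalized purity deficit of the reduced state on the row pair. This quantity is invariant under:
\begin{itemize}
\item transposition (swapping the two sides of the bipartition),
\item permuting indices within a side, i.e.\ a full swap $P X P$, or the maps $X\mapsto XP$ and $X\mapsto PX$,
\end{itemize}
since all of these leave the singular values unchanged.

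Thus for any two matrices related by an index permutation, their entropies agree whenever the permutation maps the row-index pair to one side of the same bipartition. I would write $\mathrm{R}$, $\Gamma$ and $\chi$ explicitly as permutations of the positions $(i,j,k,l)$:
\begin{itemize}
\item $\mathrm{R}$ swaps positions $2$ and $3$;
\item $\Gamma$ swaps positions $2$ and $4$;
\item $\chi$ is a $3$-cycle on positions $\{2,3,4\}$.
\end{itemize}
The statement then reduces to a check on the three pairings of $\{1,2,3,4\}$: the $3$-cycle on $\{2,3,4\}$ permutes the three pairings cyclically. So $S(Z)$, $S(Z^{\mathrm{R}})$ and $S(Z^{\Gamma})$ are the purities of $Z$ across the three pairings. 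For a $\chi$-fixed $Y$, the three pairings give the same value because $\chi$ maps pairing to pairing while fixing $Y$. This yields $S(Y)=S(\chi(Y))$ read across the rotated pairing, which is $S(Y^{\mathrm{R}})$ or $S(Y^\Gamma)$.

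Finally, $Y\neq0$ guarantees that $\mathrm{Tr}(YY^\dagger)\neq0$, and since these rearrangements preserve the Frobenius norm, all three entropies are well defined.
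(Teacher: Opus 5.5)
Your proposal is correct and follows essentially the paper's route: you establish $\chi$-invariance of $Y$ (so $Y^{\mathrm{R}\Gamma}=Y^{\Gamma\mathrm{R}}=Y$ and $Y^{\mathrm{R}}=Y^\Gamma$), then use that invariance to rename indices so the three balanced cuts coincide, and your coset relation $\chi\mathrm{R}=\mathrm{R}\chi^2$ spells out the odd branch, which the paper only calls ``almost identical.'' The only difference is presentation: the paper does the renaming as an explicit index computation, which gives the stronger identity $YY^\dagger=Y^{\mathrm{R}}(Y^{\mathrm{R}})^\dagger=Y^\Gamma(Y^\Gamma)^\dagger$, whereas your argument (the 3-cycle cyclically permutes the pairings $\{12|34\},\{13|24\},\{14|23\}$) gives equality of singular values, which is all $S$ needs.
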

\begin{proof}
We prove the claim for $Y=\Xi_{\texttt{e}}(X)$ because the proof for $Y=\Xi_{\texttt{o}}(X)$ is almost identical. By definition, $Y=\left(X+X^{{\mathrm{R}}\Gamma}+X^{\Gamma{\mathrm{R}}}\right)/3$. Hence $Y$ is invariant under the even permutations generated by ${\mathrm{R}}\Gamma$ and $\Gamma{\mathrm{R}}$, namely
$Y^{{\mathrm{R}}\Gamma}=Y$ and $Y^{\Gamma{\mathrm{R}}}=Y$.
Moreover,
\begin{equation}
Y^{\mathrm{R}}=Y^\Gamma=\frac13\left(X^{\mathrm{R}}+X^\Gamma+X^{{\mathrm{R}}\Gamma{\mathrm{R}}}\right)=\Xi_{\texttt{o}}(X).
\end{equation}
We now compare the Gram matrices,
\begin{align}
\left(Y^{\mathrm{R}}(Y^{\mathrm{R}})^\dagger\right)_{ij,mn}&=\sum_{k,l}Y^{\mathrm{R}}_{ij,kl}\,\overline{Y^{\mathrm{R}}_{mn,kl}}=\sum_{k,l}Y_{ik,jl}\,\overline{Y_{mk,nl}}\\
&=\sum_{k,l}Y_{ij,lk}\,\overline{Y_{mn,lk}}\label{invariant-substitution}\\
&=\left(YY^\dagger\right)_{ij,mn},
\end{align}
where in Eq.~\eqref{invariant-substitution} we used the invariance
$Y^{{\mathrm{R}}\Gamma}=Y$ which is equivalent to $Y_{il,jk}=Y_{ij,kl}$, so effectively, it renames indices. Similarly, $Y^{\Gamma{\mathrm{R}}}=Y$ implies
$Y^\Gamma(Y^\Gamma)^\dagger=YY^\dagger$. Hence $YY^\dagger=Y^{\mathrm{R}}(Y^{\mathrm{R}})^\dagger=Y^\Gamma(Y^\Gamma)^\dagger$.
Since $S(X)$ depends only on $XX^\dagger$, it follows that
$S(Y)=S(Y^{\mathrm{R}})=S(Y^\Gamma)$.
\end{proof}

In fact, the proof establishes the stronger identity
\begin{equation}
YY^\dagger=Y^{\mathrm{R}}(Y^{\mathrm{R}})^\dagger= Y^\Gamma(Y^\Gamma)^\dagger,
\end{equation}
so the three balanced reduced states have identical spectra under the natural identifications of the corresponding Hilbert spaces. Related recent work has studied constrained subspaces in which several bipartitions have identical entanglement spectra, under the name ``bundling'' of bipartite entanglement~\cite{DSDL26}. In the present construction the relevant subspace is obtained explicitly from reshuffling and partial transposition and is tailored to the three balanced cuts of a four-party tensor.

The convergence of the two subsequences is exponentially fast. In numerical simulations, after several dozen iterations the even and odd subsequences are typically very close to their respective limits $\Xi_{\texttt{e}}(X_0)$ and $\Xi_{\texttt{o}}(X_0)$. However, in general, the full sequence $(X_k)$ need not converge and it happens if and only if $\Xi_{\texttt{o}}(X_0)=\Xi_{\texttt{e}}(X_0)$.

The two limiting maps admit a natural interpretation in terms of the group $\mathbb{G}$. This group decomposes into the cyclic subgroup $\mathbb{G}_{\texttt{e}} = \{\mathbb{I},\mathrm{R}\Gamma,\Gamma\mathrm{R}\}$ and its complementary coset $\mathbb{G}_{\texttt{o}} = \{\mathrm{R},\Gamma,\mathrm{R}\Gamma\mathrm{R}\}$. Accordingly,
\begin{equation}
\Xi_{\texttt{e}}(X) = \frac{1}{3} \sum_{g\in\mathbb{G}_{\texttt{e}}}X^g\quad\text{and}\quad \Xi_{\texttt{o}}(X) = \frac{1}{3} \sum_{g\in\mathbb{G}_{\texttt{o}}}X^g. \end{equation}
This decomposition also explains why the original sequence does not converge in general. Each application of either generator $\mathrm{R}$ or $\Gamma$ changes the parity of the corresponding group element. Consequently, the even iterates remain supported on $\mathbb{G}_{\texttt{e}}$ while the odd ones remain on $\mathbb{G}_{\texttt{o}}$. Within each parity class, the iterative averaging becomes asymptotically uniform, yielding the two limiting branches above. Averaging the two branches gives the full group-averaging projector \begin{equation}
\Xi_{\texttt{full}}(X) \equiv \frac{1}{2}\big( \Xi_{\texttt{e}}(X)+\Xi_{\texttt{o}}(X) \big) = \frac{1}{6}\sum_{g\in\mathbb{G}}X^g. \label{eq:Xi-full-average} 
\end{equation}
Equivalently,
\begin{equation}
\Xi_{\mathrm{full}}(X) = \lim_{k\to\infty} \frac{X_{2k}+X_{2k+1}}{2}. 
\end{equation}
The map $\Xi_{\mathrm{full}}$ projects onto the subspace invariant under the full group $\mathbb{G}$, while $\Xi_{\texttt{e}}$ projects only onto the subspace invariant under its cyclic subgroup $\mathbb{G}_{\texttt{e}}$. Hence, the full group average is not equivalent to either limiting branch because it imposes the stronger conditions $Y=Y^{\mathrm{R}}=Y^\Gamma$ removing the distinction between the even and odd limits.

This distinction is important because the weaker cyclic symmetry of $\Xi_{\texttt{e}}$ can admit two-unitary fixed points while invariance under the full group is incompatible with unitarity, see Subsection~\ref{Xi-properties}. Related symmetry-constrained constructions of multiunitary tensors and multipartite states with geometrically selected maximally entangled cuts have been studied in Ref.~\cite{MPW2024}.

\medskip

Let $\Xi$ denote either of the two limiting maps
$\Xi_{\texttt e}$ and $\Xi_{\texttt o}$. In the following, we will specify the parity of $\Xi$ only when it is necessary.

\subsection{\texorpdfstring{Properties of the Map $\Xi$}{}}\label{Xi-properties}

The limiting map $\Xi_{\texttt{e}}$ is a projector. Indeed, one can readily check that $\Xi_{\texttt{e}}^2=\Xi_{\texttt{e}}$. However, it is not the case for the second branch. Instead, $\Xi_{\texttt{o}}^2=\Xi_{\texttt{e}}$ and the anti-commutator $\{\Xi_{\texttt{e}},\Xi_{\texttt{o}}\}=\Xi_{\texttt{e}}\Xi_{\texttt{o}}+\Xi_{\texttt{o}}\Xi_{\texttt{e}}=2\Xi_{\texttt{o}}$. This implies that $\Xi_{\texttt{o}}^3=\Xi_{\texttt{o}}$. This makes $\Xi_{\texttt{o}}$ an involution on the image of $\Xi_{\texttt{e}}$, that is, if $Y\in\textrm{Image}(\Xi_{\texttt{e}})$ then $\Xi_{\texttt{e}}(Y)=Y$ and $\Xi_{\texttt{o}}^2(Y)=\Xi_{\texttt{e}}(Y)=Y$.

This raises two natural questions. First, for what matrices $X$, $\Xi(X)=X$. Second, given a generic $X$, what is the average entropy $S(\Xi(X))$. In the following we will address these two problems.

\subsubsection{\texorpdfstring{Fixed Points}{}}

The fixed points of the two limiting maps have a simple description based on the group theory. Adopting the notation used in the proof of Proposition~\ref{Xk-convergence}, we see that $\Xi_{\texttt{e}}$ is the averaging projection onto the subspace invariant under the cyclic subgroup generated by $\chi=\mathrm{R}\Gamma$, see~\eqref{group-table}. Hence
$\Xi_{\texttt{e}}(X)=X$ if and only if $X^{{\mathrm{R}}\Gamma}=X^{\Gamma{\mathrm{R}}}=X$.

Any fixed point of $\Xi_{\texttt{o}}$ must also be a fixed point of $\Xi_{\texttt{e}}$, because of $\Xi_{\texttt{o}}^2=\Xi_{\texttt{e}}$. Hence $\Xi_{\texttt{o}}(X)=X$ if and only if $X$ is invariant under all elements of $\mathbb{G}$, equivalently
\begin{equation}
X=X^{\mathrm R}=X^\Gamma=X^{{\mathrm R}\Gamma}=X^{\Gamma{\mathrm R}}
=X^{{\mathrm R}\Gamma{\mathrm R}}.
\end{equation}

It is known that there are no unitary matrices $X$ such that $X=X^{\mathrm{R}}=X^{\rm\Gamma}$~\cite{BRZ24}.
If there was such a strongly two-unitary matrix $X$, then
\begin{align}
\Xi_{\texttt{e}}(X)&=\frac{1}{3}X+\frac{1}{3}X^{\mathrm{R}\Gamma}+\frac{1}{3}X^{\rm\Gamma R}\\
&=\frac{1}{3}X+\frac{1}{3}X+\frac{1}{3}X=\Xi_{\texttt{o}}(X)=X.
\end{align}
One can relax the condition and require only $X^{\mathrm{R}}=X^{\rm \Gamma}$. This automatically guarantees that $X=\Xi_{\texttt{e}}(X)$. In particular, every nonzero constant matrix, $\forall_{i,j,k,l}:X_{ij,kl} = \texttt{constant}$, is a fixed point for both branches of $\Xi$, however, their entropy vanishes, $s=0$. Nevertheless, for some dimensions $d$ one can find a matrix $X$ which is a two-unitary fixed point of either branch of $\Xi$.

Consider $d=3$ and the permutation matrix
\begin{equation}
    P_9= [1,   9,   5,   8,   4,   3,   6,   2,   7],
\end{equation}
 where each integer denotes the position of unity in the consecutive column of $P_{9}$. This matrix is two-unitary, $s(P_9)=1$. Matrix $X_9(\varphi)$ defined as
\begin{equation}
X_9(\varphi)=e^{i\varphi}  P_{9}\left[\begin{array}{ccc}
1 & 0 & 0\\
0 & 0 & 1 \\
0 & 1 & 0
\end{array}
\right]\otimes \mathbb{I}_3,
\end{equation}
with $\varphi\in[0, 2\pi)$, represents a family of fixed points, $\Xi_{\texttt{e}}(X_9(\varphi))=X_9(\varphi)$. Similarly, for $d=4$ there exists a special two-unitary permutation matrix
\begin{equation}
    P_{16}=[ 1,16,6,11,15,2,12,5,8,9,3,14,10,7,13,4].
\end{equation}
Matrix $X_{16}(\varphi)= e^{i\varphi}P_{16}\big(Q_C \otimes Q_D\big)$, with 
\begin{equation}
Q_C= \left[\begin{array}{cccc}
0 & 0 & 0 & 1\\
1 & 0 & 0 & 0\\
0 & 0 & 1 & 0\\
0 & 1 & 0 & 0
\end{array}\right], \quad
Q_D = \left[\begin{array}{cccc}
0 & 1 & 0 & 0\\
0 & 0 & 0 & 1\\
0 & 0 & 1 & 0\\
1 & 0 & 0 & 0
\end{array}\right]
\end{equation}
also represents a family of fixed points for $\Xi_{\texttt{e}}$. In both cases, $s(X_j(\varphi))=1$ for every $\varphi\in[0,2\pi)$ and $j\in\{9,16\}$. The right-multiplying local permutation matrices preserve two-unitarity and are chosen so that $X_j(\varphi)$ satisfies the fixed-point condition.

\subsubsection{Relation to Latin Squares}

Two-unitary permutation matrices are naturally related to pairs of orthogonal Latin squares of order $d$~\footnote{Two Latin squares of order $d$ are said to be orthogonal, written OLS($d$), if, when superimposed, they form an array of ordered pairs with no repetitions.}, denoted by OLS($d$). In this view, the additional condition $P=\Xi_{\texttt e}(P)$ imposes a cyclic symmetry on the underlying combinatorial structure. Thus, the problem of finding a two-unitary permutation matrix being a fixed point of $\Xi_{\texttt e}$ can be regarded as a symmetry-constrained version of the problem of constructing OLS. The unconstrained existence problem is solved and a pair of OLS of order $d$ exists for every $d\neq 2,6$~\cite{BSP60,ColbournDinitz2007}. Hence the fixed-point and also centrosymmetry requirements considered here are
genuine additional constraints. For $d=2$ a pair of OLS does not exist, and consequently there is no two-unitary permutation matrix. For four-qubit systems, the best achievable value of the corresponding entropy is $s=8/9$, in agreement with the obstruction to the existence of AME$(4,2)$ states~\cite{HIGUCHI2000213}. The case $d=6$ is Euler's classical ``$36$ officers'' problem and no pair of OLS$(6)$ exists~\cite{Stinson1984}, although its quantum
generalization admits a two-unitary solution~\cite{Euler-PRL, Suhail2024, BZ24, GG25}.

For general $d$ the numerical evidence suggests the following conjecture.
\begin{conjecture}\label{LS-conjecture}
    If $d\not\equiv 2\;(\bmod\;3)$ and $d\neq 6$, then there exists a centrosymmetric~\footnote{A matrix $P$ of size $d^2\times d^2$ is called centrosymmetric if $P_{ij}=P_{d^2+1-i,d^2+1-j}$} two-unitary permutation matrix $P$ of size $d^2$ such that $P = \Xi_{\texttt{e}}(P)$.
\end{conjecture}
It is supported by findings for $d\in\{3,4,7,9,12,13,15,16\}$. The condition $d\not\equiv 2\;(\bmod\;3)$ is necessary. Otherwise, no such permutation can exist. This follows from the cyclic character of the transformation of the indices  $j$, $k$, $l$ induced by the condition $P=P^{\mathrm{R}\Gamma}$ (note that the first index is not affected), and analysis of invariants of this transformation.

Central symmetry (CS) is not strictly required -- see the matrices $X_j$ considered in the previous paragraph. It only provides an additional constraint which significantly reduces the search space while still producing two-unitary in all tested dimensions. Additionally, one can speed-up numerical search for these matrices exploiting the fact that they must fulfill the condition $P^{\mathrm{R}}=P^{\Gamma}$. Examples of such CS matrices for $d=3$, $4$ and $7$ are depicted in Fig.~\ref{fig:PMCS_d347}.
\begin{figure}[ht!]
  \centering
  \includegraphics[width=3.25in]{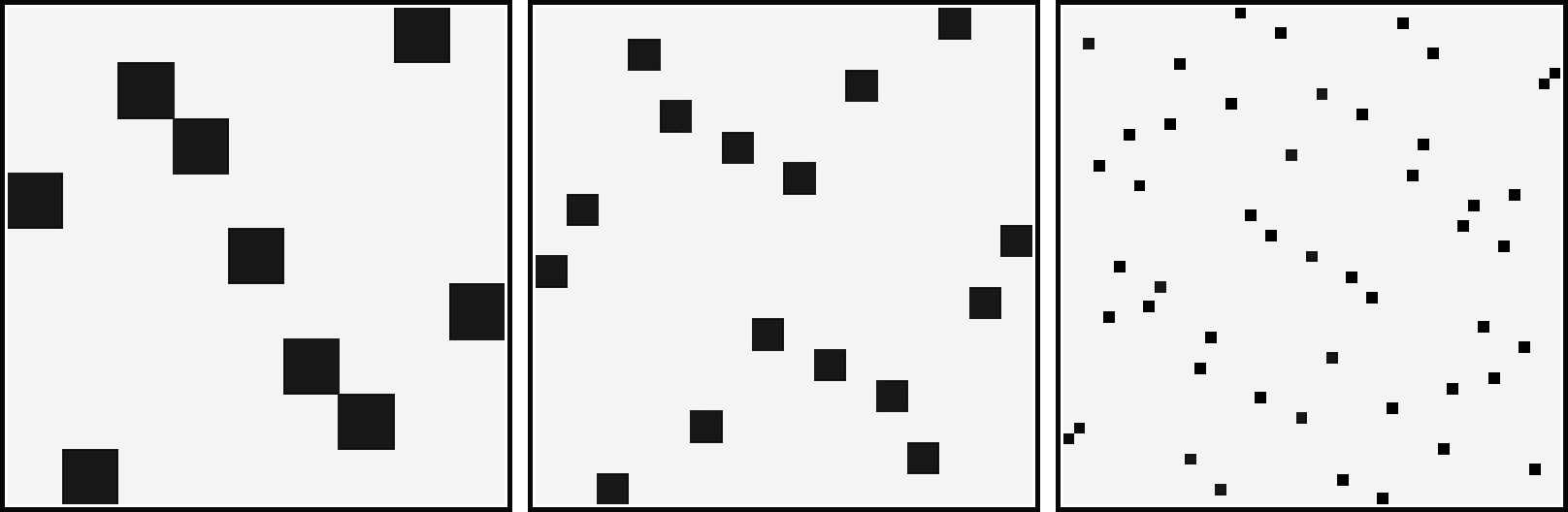}
  \caption{Three examples of two-unitary centrosymmetric permutation matrices $P$ of size $d^2$ for $d\in\{3,4,7\}$ which are fixed points of the even branch $\Xi_{\texttt{e}}$. Black squares indicate unities.}
  \label{fig:PMCS_d347}
\end{figure}

For $d=7$ and $d=9$, our symmetry-constrained search yields complete sets of $d-1$ MOLS. The additional feature is their realization by centrosymmetric permutation fixed points of $\Xi_{\texttt e}$. Explicit arrays are shown in Appendix~\ref{app:LS}. For $d=12$, the numerical search produces a large number of centrosymmetric permutations fixed by $\Xi_{\texttt e}$, but so far no configuration producing more than five MOLS has been found. The current best-known lower bound is also five, while it remains open whether six MOLS of order $12$ exist~\cite{MAVF24}.

\subsubsection{\texorpdfstring{Range of Image and Asymptotics}{}}

Given an initial point $X$ it is unusual that $S(\Xi(X))$ saturates at $1$,
unless $X$ has a very special structure. Nevertheless, a generic choice of the starting point provides matrices $\Xi(X)$ for which its entropy is quite close to unity and, moreover, as the local dimension $d$ grows, $S(\Xi(X))$ approaches $1$ relatively fast.

\begin{proposition}
For a Haar-random unitary matrix $X\in\mathbb{U}(d^2)$, the leading moment approximation predicts
\begin{equation}
1-\mathbb{E}\left[ S\left(\Xi_{\texttt{e}}(X)\right) \right] \sim \frac{1}{d^2} \quad\text{as}\quad d\to\infty.
\end{equation}
\end{proposition}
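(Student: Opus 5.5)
Write $Y=\Xi_{\texttt e}(X)=\frac13(X+X^{\mathrm R\Gamma}+X^{\Gamma\mathrm R})$. By the definition~\eqref{SL_entropy},
\begin{equation}
1-S(Y)=\frac{d^2}{d^2-1}\left(\frac{\mathrm{Tr}(YY^\dagger YY^\dagger)}{\mathrm{Tr}^2(YY^\dagger)}-\frac{1}{d^2}\right).
\end{equation}
The leading moment approximation replaces the expectation of the ratio by the ratio of expectations, $\mathbb E[N/D^2]\approx \mathbb E[N]/\mathbb E[D]^2$. Here $N=\mathrm{Tr}(YY^\dagger YY^\dagger)$ and $D=\mathrm{Tr}(YY^\dagger)$. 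The plan is therefore to compute $\mathbb E[D]$ and $\mathbb E[N]$ to leading and first subleading order in $d$. I will use the Weingarten calculus for $\mathbb U(d^2)$, or equivalently the Gaussian (Ginibre) approximation $X_{ij,kl}\approx g_{ij,kl}$ with $\mathbb E|g|^2=1/d^2$. The Gaussian approximation is justified to the orders needed, since corrections from the Weingarten function are relatively $O(d^{-4})$ for degree-four moments.

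\textbf{Step 1: the denominator.} Expand $D=\frac19\sum_{g,h\in\mathbb G_{\texttt e}}\mathrm{Tr}(X^g(X^h)^\dagger)$. Each $g\in\mathbb G_{\texttt e}$ acts as an index permutation, so it preserves the Hilbert--Schmidt norm. The diagonal terms therefore give $3\cdot d^2$. A cross term $\mathrm{Tr}(X (X^{\mathrm R\Gamma})^\dagger)=\sum X_{ij,kl}\overline{X_{il,jk}}$ has nonzero expectation only on index coincidences fixed by the cyclic permutation $(j\,k\,l)\to(l\,j\,k)$, i.e.\ $j=k=l$. This contributes $d^2\cdot d^{-2}=O(1)$. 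Hence $\mathbb E[D]=\frac{d^2}{3}(1+O(d^{-2}))$.

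\textbf{Step 2: the numerator.} Expand $N$ into $3^4=81$ terms $\frac1{81}\mathrm{Tr}(X^{g_1}X^{g_2\dagger}X^{g_3}X^{g_4\dagger})$. Each term's Gaussian expectation is a sum over the two Wick pairings of the two $X$'s with the two $\bar X$'s, and each pairing gives a power of $d$ counted by free index cycles.
\begin{itemize}
\item When $g_1=g_2=g_3=g_4$, the term is $\mathbb E\,\mathrm{Tr}(XX^\dagger XX^\dagger)$ for a unitary, which equals $d^2$ exactly.
\item Terms of the form $g_1=g_2$, $g_3=g_4$ (and their cyclic analogues) contribute via the planar pairing $\sim (d^2)^3\cdot d^{-4}=d^2$ in the generic case. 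This is exactly the Marchenko--Pastur-type value that gives $\mathrm{Tr}\rho^2\approx 2/d^2$ for the Gaussian-like sum. The point is that $Y$ is a normalized sum of three nearly independent unitaries and behaves like a Ginibre matrix rather than a unitary.
\end{itemize}
This suggests $1-S$ tends to a constant, which would contradict the claim. So the first thing I must check carefully is whether the three terms $X,X^{\mathrm R\Gamma},X^{\Gamma\mathrm R}$ really decorrelate at the level of $YY^\dagger$. If $\mathbb E[N]/\mathbb E[D]^2=\frac1{d^2}(1+c/d^2+\dots)$, the prediction $1-\mathbb E S\sim c'/d^2$ follows.

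I expect the main obstacle to be this bookkeeping. I must classify, for each of the 81 words and each Wick pairing, the number of free index sums after applying the index permutations. I must then show which terms produce the leading $1/d^2$ piece of $\mathrm{Tr}\rho_Y^2$ and which produce the correction. I would organize this by the combinatorial action of $\chi$ on the index triple $(j,k,l)$ and treat the mixed terms as "crossing" diagrams suppressed by $d^{-2}$. Finally, I would check the constant against the numerics quoted in the paper.
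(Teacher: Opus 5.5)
Your Step 1 coincides with the paper's computation ($\mathbb{E}[D]=(3d^2+6)/9$), and replacing the expectation of the ratio by the ratio of moments is exactly the paper's approximation. The gap is in Step 2, where you misread your own identity. Since $1-S=\frac{d^2}{d^2-1}\bigl(N/D^2-d^{-2}\bigr)$, a ratio $N/D^2\approx\kappa/d^2$ gives $1-S\approx(\kappa-1)/(d^2-1)$. The Marchenko--Pastur value $\kappa=2$ that you identified therefore yields $1-S\approx 1/(d^2-1)$. It does not contradict the claim; it \emph{is} the claim. It is also precisely what the paper uses: $\mathbb{E}[N]\sim 2d^2/9$ and $\mathbb{E}[D]\sim d^2/3$ give a ratio $\sim 2/d^2$, hence $\mathbb{E}[S]\simeq 1-1/d^2$. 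Conversely, the outcome you propose to establish, $\mathbb{E}[N]/\mathbb{E}[D]^2=d^{-2}(1+c/d^2+\dots)$, would give $1-\mathbb{E}[S]\approx c/d^4$ and would refute the proposition. So the plan aims at the wrong target. As written, it also never evaluates the fourth moment, which carries the whole argument.

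That target is also false, for a concrete reason: $X^{\mathrm R\Gamma}$ and $X^{\Gamma\mathrm R}$ are not unitary. Since $X^{\mathrm R\Gamma}_{ij,kl}=X_{ik,lj}$, this is the reshaping of $|X\rangle\!\rangle$ across the $AD|BC$ cut. A Weingarten count gives $\mathbb{E}\,\mathrm{Tr}\bigl((X^{\mathrm R\Gamma}(X^{\mathrm R\Gamma})^\dagger)^2\bigr)=2d^4/(d^2+1)\approx 2d^2$. Hence your first bullet holds only for $g=\mathbb{I}$. $Y$ is not a sum of three nearly independent unitaries but of one unitary and two Ginibre-like pieces, which is why $\kappa>1$.

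Relatedly, your justification of the Gaussian replacement is wrong. Subleading Weingarten terms are suppressed by $d^{-2}$, but they can come with an extra free index sum. This is why the Gaussian value $2d^2$ for $\mathrm{Tr}(XX^\dagger XX^\dagger)$ differs from the exact value $d^2$ that you quote. For the scaling claim this matters only through the constant. All the $81$-term bookkeeping must deliver is $d^2\,\mathbb{E}[N]/\mathbb{E}[D]^2\to\kappa>1$, and then $1-\mathbb{E}[S]\simeq(\kappa-1)/d^2$ follows.
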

\begin{proof}
We provide a moment-based derivation of the leading behavior for large $d$.

Let $X\in\mathbb{U}(d^2)$ be a Haar-random unitary matrix and let
$Y=\Xi_{\texttt{e}}(X)$. We will estimate the expectation value $\mathbb{E}\left[S(Y)\right]$. We start with  $\mathrm{Tr}(YY^\dagger)$, it expands as
\begin{align}
\mathrm{Tr}\left(YY^\dagger\right)
&=\frac19\mathrm{Tr}\left[
\big(X+X^{\mathrm{R}\Gamma}+X^{\Gamma\mathrm{R}}\big)
\big(...\big)^\dagger\right] \\
&=\frac19\sum_{g_1,\,g_2\,\in\,\{\mathbb{I},\,\mathrm{R}\Gamma,\,\Gamma\mathrm{R}\}}
\mathrm{Tr}\left(X^{g_1}(X^{g_2})^\dagger\right)\\
&=\frac19\sum_{g\,\in\,\{\mathbb{I},\,\mathrm{R}\Gamma,\,\Gamma\mathrm{R}\}}
\mathrm{Tr}\left(X^g(X^g)^\dagger\right)\label{double-g-sum}\\
&+\frac19
\sum_{g_1\neq g_2}\mathrm{Tr}\left(X^{g_1}(X^{g_2})^\dagger\right).
\end{align}
The first sum in the last equality (Eq.~\eqref{double-g-sum}) contains three terms. Since
each $g$ only permutes the entries of $X$, it preserves the
Frobenius norm. Hence
$\mathrm{Tr}\left(X^g(X^g)^\dagger\right)=\mathrm{Tr}\left(XX^\dagger\right)=d^2$.
For the second sum, we use index notation and, for simplicity, we define $I\equiv ij,kl$. Since each
$g_\alpha$ with $\alpha\in\{1,2\}$, only rearranges the four-index
configuration, we can write that  $X^{g_\alpha}_{I}=X_{g_\alpha(I)}$.
Then
\begin{align}
\mathbb{E}\left[\mathrm{Tr}\left(X^{g_1}(X^{g_2})^\dagger\right)\right]
&=\sum_{I}\mathbb{E}\left[X^{g_1}_{I}\overline{X^{g_2}_{I}}\right]\\
&=\frac{1}{d^2}\#\big\{I:g_1(I)=g_2(I)\big\}.
\end{align}
For $g_1\neq g_2$ from $\{\mathbb{I},\,\mathrm{R}\Gamma,\,\Gamma\mathrm{R}\}$,
the relative permutation $g_2^{-1}g_1$ is a nontrivial cyclic
permutation of only the three indices $j$, $k$, $l$. Its fixed points satisfy $j=k=l$, thus there are $d$ choices for $i$ (which stays intact) and $d$ choices for the common
value of $j=k=l$, which gives $d^2$ terms in total. Finally
\begin{equation}
\mathbb{E}\left[\mathrm{Tr}\left(X^{g_1}(X^{g_2})^\dagger\right)\right]=\frac{d^2}{d^2}=1
\end{equation}
and, for $d\gg 1$,
\begin{equation}
\mathbb{E}\left[\mathrm{Tr}\left(YY^\dagger\right)\right]=\frac19\left(3d^2+6\right) \sim \frac{d^2}{3}.
\end{equation}
A fourth-moment calculation based on Haar-unitary matrix elements~\cite{CS2006} gives, to leading order, 
\begin{equation}
\mathbb{E}\big[ \mathrm{Tr}\big((YY^\dagger)^2\big)\big] \sim \frac{2d^2}{9}. 
\end{equation}
The complete fourth-moment expansion is straightforward but lengthy, hence we retain only the contribution relevant to the leading behavior for $d\to\infty$. We replace the expectation of the ratio entering the entropy by the ratio of its leading moments:
\begin{align}
\mathbb{E}\left[\frac{\mathrm{Tr}\big((YY^\dagger)^2\big)} {\mathrm{Tr}^2(YY^\dagger)}\right] &\simeq \frac{\mathbb{E}\left[\mathrm{Tr}\big((YY^\dagger)^2\big)\right]}{\big( \mathbb{E}\left[\mathrm{Tr}(YY^\dagger)\right] \big)^2 } \\
&\sim \frac{2d^2/9}{(d^2+2)^2/9} \sim \frac{2}{d^2}.
\end{align}
Consequently, 
\begin{align} \mathbb{E}\left[S(Y)\right] &\simeq \frac{d^2}{d^2-1} \left( 1-\frac{2d^2}{(d^2+2)^2} \right) \\
&= 1-\frac{1}{d^2}+O(d^{-4}).
\end{align}
An analogous calculation applies to the odd branch of $\Xi$.
\end{proof}

The numerical results presented in Fig.~\ref{fig:Xi_convergence} support this approximation.

\begin{figure}[ht!]
  \centering
  \includegraphics[width=3.0in]{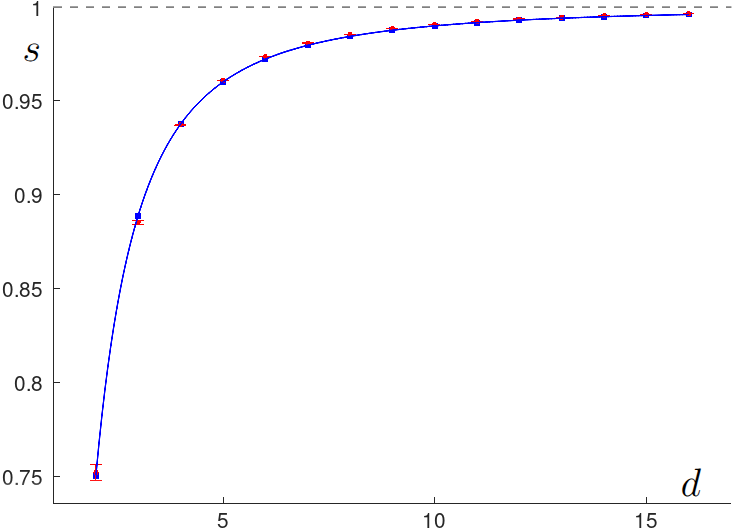}
  \caption{As the local dimension $d$ tends to infinity, $\mathbb{E}\left[S(\Xi_{\texttt{e}}(X))\right]$, for a Haar-random unitary $X\in\mathbb{U}(d^2)$, saturates at $1$. The vertical axis indicates the equal-entropy values $s=S(\Xi_{\texttt{e}}(X))$ defined in Sec.~\ref{sec:AME4d}. Squares on the solid line represent theoretical predictions, while dots with error bars (corresponding to the standard error) show the sample means over $256$ independent Haar-random unitaries.}
  \label{fig:Xi_convergence}
\end{figure}

\subsection{Distance from Maximal Mixedness}

We conclude this section with a remark on the relation between the linear-entropy deficit and the distance from maximal mixedness. The map $\Xi$ enforces equality of the entropies associated with the three balanced bipartitions, but the entropy is not a metric on the space of states. To quantify the deviation of an individual reduced state from maximal mixedness, recall $\rho_X=XX^\dagger/\mathrm{Tr}(XX^\dagger)$ and  $\rho_*=\mathbb{I}_{d^2}/d^2$, and define the Hilbert--Schmidt defect
\begin{equation}
\delta_{\mathrm{HS}}(X) = \left\|\rho_X-\rho_*\right\|_2. \end{equation}
A direct calculation gives
\begin{align}
\delta_{\mathrm{HS}}^2(X) &= \mathrm{Tr}\rho_X^2-\frac{1}{d^2}\\
&= \frac{d^2-1}{d^2}\bigl(1-S(X)\bigr).
\label{eq:HS-entropy}
\end{align}
Thus, the normalized linear-entropy deficit is exactly proportional to the squared HS distance from the maximally mixed state. For a four-party tensor, it is useful to introduce the marginal AME defect
\begin{equation}
\Delta_{\mathrm{AME}}(X) = \max_{g\in\{\mathbb{I},\mathrm{R},\Gamma\}} \left\|\rho_{X^g}-\rho_*\right\|_2.
\end{equation}

The quantity $\Delta_{\mathrm{AME}}$ measures violation of the AME marginal conditions but it is not a metric distance to the set of exact AME states. This quantity vanishes precisely when the normalized vectorization of $X$ is an AME$(4,d)$ state. If $Y=\Xi(X)\neq0$ and $s=S(Y)=S(Y^{\mathrm{R}})=S(Y^\Gamma)$, then Eq.~\eqref{eq:HS-entropy} implies
\begin{align}
\left\|\rho_Y-\rho_*\right\|_2 &= \left\|\rho_{Y^{\mathrm{R}}}-\rho_*\right\|_2 = \left\|\rho_{Y^\Gamma}-\rho_*\right\|_2\\
&= \sqrt{\frac{d^2-1}{d^2}}\sqrt{1-s}.
\end{align}
Hence, the image of $\Xi$ is not only equi-entropic but also its three reduced states are equidistant from the maximally mixed state in the HS norm. The corresponding trace-norm deviation satisfies
\begin{equation}\label{eq:trace-HS-bound}
\left\|\rho_X-\rho_*\right\|_1 \leq d\left\|\rho_X-\rho_*\right\|_2 = \sqrt{d^2-1}\sqrt{1-S(X)}.
\end{equation}
So, for fixed $d$, any sequence satisfying $S(X)\to 1$ also satisfies $\rho_X\to\rho_*$ in both HS and trace norm. However, when $d$ also tends to infinity, the bound guarantees trace-norm convergence under the stronger sufficient condition $d^2(1-S(X))\to 0$. In particular, an entropy deficit of order $O(d^{-2})$ guarantees the HS defect of order $O(d^{-1})$ but the bound~\eqref{eq:trace-HS-bound} alone does not establish trace-norm convergence. Moreover, a small marginal AME defect does not by imply the existence of a nearby two-unitary matrix which is exact.

\section{Summary}\label{sec:summary}

We introduced and analyzed a linear map $\Xi$ that enforces equality of the normalized linear entropies associated with the three balanced bipartitions of a four-party system. 

The image of $\Xi$ consists of matrices with equal entropies, however, their common entropy need not be maximal. Special fixed points still can be two-unitary thus corresponding to exact AME$(4,d)$ states. For permutation matrices, the fixed-point condition imposes an additional cyclic symmetry on the associated orthogonal Latin squares. Numerical results reveal such symmetric two-unitary fixed points in several dimensions which motivate Conjecture~\ref{LS-conjecture}.

Acting on Haar-random unitary matrices, the map $\Xi$ produces highly entangled states for which the common entropy approaches the maximal value as the local dimension $d$ tends to infinity. Although $\Xi$ does not generally produce exact AME$(4,d)$ states, it provides a simple mechanism for constructing states with exactly equal entropies. In the leading-moment approximation, the average entropy deficit decreases as $O(d^{-2})$. Thus, the map enforces exact entropy equality in every dimension, while typical high-dimensional inputs make the common value close to maximal. The symmetry enforced by $\Xi$ accounts for the equality of the three entropies, while their near-maximal value reflects the statistics of high-dimensional random inputs.

It remains an open question whether the tensor rearrangements defining $\Xi$ admit a useful operational realization, possibly through an embedding into an enlarged Hilbert space. It would also be interesting to identify quantum-information tasks in which the equi-entropic states generated by $\Xi$ offer an advantage over generic highly entangled states. Exact AME states are known to be optimal resources for several quantum-information tasks~\cite{Casas_2026}, while approximate forms of multipartite maximal entanglement may remain useful under realistic, noisy conditions~\cite{approximate-k-uniform-states}. Since typical outputs of $\Xi$ have exactly equal bipartition entropies and an average entropy deficit that decreases with the local dimension, they provide natural candidates for investigation in this context. Establishing their operational usefulness, however, requires a notion of approximation beyond the linear-entropy deficit and is left for future work.

A continuous-variable (CV) extension provides another natural direction. At the level of tensor amplitudes, the index rearrangements underlying $\Xi$ can be formulated formally for four identical bosonic modes, but the finite-dimensional normalization of the linear entropy and the maximally mixed reference state do not extend directly to the CV setting. A physically meaningful formulation could impose an energy constraint or finite squeezing and characterize the balanced reductions through their purities or through covariance matrices and symplectic spectra, when considered in the Gaussian setting. This would connect the present construction with earlier work on Gaussian multipartite entanglement and recent formulations of CV AME states~\cite{FacchiGaussian2009,KBA25}.

\section{Acknowledgements}

We acknowledge discussions with Remigiusz Augusiak, Micha{\l} Eckstein, Mario Flory, Pawe{\l} Horodecki and {\L}ukasz Skowronek. Some of the ideas presented in this paper were first introduced in the PhD thesis of the first author and are further developed here. WB is supported by the National Science Centre (Poland) through the SONATA BIS project No. 2019/34/E/ST2/00369. The Center for Quantum-Enabled Computing project is carried out within the International Research Agendas programme of the Foundation for Polish Science co-financed by the European Union under the European Funds for Smart Economy 2021--2027 (FENG). 
ZR. acknowledges support from the German Federal Ministry of Research, Technology and Space (BMFTR) through the Quantum Futur project "Quantum Secure Communication with Adaptive Layers and Ranks (Q-SCALAR)", Grant No. 13N17638.

\bibliography{bibliography}

\appendix

\section{Explicit MOLS Associated with Symmetric Two-Unitary Permutations}\label{app:LS}

In this appendix, we provide complete sets of mutually orthogonal Latin squares found in the symmetry-constrained numerical search described in Sec.~\ref{sec:equi-entropic-map}. The two collections below form complete sets of $d-1$ mutually orthogonal LS for $d=7$ and $d=9$, respectively. Each LS is represented row by row.  The symbols are taken from $\{0,\ldots,d-1\}$.

\subsection{\texorpdfstring{MOLS$(7)$}{}}

\begin{widetext}
\begin{center}
{\small
\noindent\verb"2430516  3502146  2146035  4105263  6324105  5421360"
\verb"6324105  5421360  3502146  2430516  4105263  2146035"
\verb"3041652  1360254  6035421  1652430  0516324  0254613"
\verb"5263041  4613502  4613502  5263041  5263041  4613502"
\verb"4105263  2146035  5421360  6324105  2430516  3502146"
\verb"1652430  6035421  0254613  0516324  3041652  1360254"
\verb"0516324  0254613  1360254  3041652  1652430  6035421"}
\end{center}
\end{widetext}

\subsection{\texorpdfstring{MOLS$(9)$}{}}

\begin{widetext}
\begin{center}
{\small
\noindent\verb"135076284  470821356  746051832  761480253  164783052  831257046  178264503  567218430"
\verb"612453708  712534068  862105374  340275618  487150263  504361278  201873465  306182547"
\verb"254607813  567218430  583426017  457163082  273015648  780613524  746051832  712534068"
\verb"367821540  143607285  324510786  825631407  751462380  627084315  862105374  470821356"
\verb"728345061  681345702  037642158  213048576  310246875  152840637  037642158  681345702"
\verb"843760125  306182547  201873465  184752360  805624731  375408162  415387620  235760814"
\verb"570182436  854076123  178264503  608527134  042378516  463572801  650738241  028453671"
\verb"081534672  028453671  415387620  072316845  526837104  016725483  324510786  143607285"
\verb"406218357  235760814  650738241  536804721  638501427  248136750  583426017  854076123"}
\end{center}
\end{widetext}

\end{document}